\def\C{\mathbb C}
\def\Q{\mathbb Q}
\def\K{\mathbb K}
\def\CP{\mathbb{CP}}
\def\mc{\mathcal}
\def\nn{\nonumber}
\def\eps{\epsilon}
\def\avg#1{\langle#1\rangle}
\def\Res{\operatorname{Res}}
\def\CHY{{\operatorname{CHY}}}
\def\vol{\operatorname{vol}\,\!}
\def\SL{\operatorname{SL}}
\def\LT{\operatorname{LT}}
\def\tree{{\operatorname{tree}}}
\def\In{\operatorname{in}}
\newtheorem{thm}{Theorem}
\newtheorem{corollary}{Corollary}[thm]
\newtheorem{definition}[thm]{Definition}
\newtheorem{proposition}[thm]{Proposition}
\begin{document}
\date{\today}
\author{Jorrit Bosma}
\affiliation{
Institute for Theoretical Physics, ETH Z{\"u}rich, \\
Wolfgang-Pauli-Strasse 27, CH-8093 Z{\"u}rich, Switzerland
}
\author{Mads S{\o}gaard}
\affiliation{
SLAC National Accelerator Laboratory, Stanford University, \\
2575 Sand Hill Road, Menlo Park, CA 94025, USA
}
\author{Yang Zhang}
\affiliation{
Institute for Theoretical Physics, ETH Z{\"u}rich, \\
Wolfgang-Pauli-Strasse 27, CH-8093 Z{\"u}rich, Switzerland
}

\title{The Polynomial Form of the Scattering Equations is an H-Basis}

\begin{abstract}
We prove that the polynomial form of the scattering equations is a Macaulay
H-basis. We demonstrate that this H-basis facilitates integrand reduction and
global residue computations in a way very similar to using a Gr{\"o}bner basis,
but circumvents the heavy computation of the latter. As an example, we apply the
H-basis to prove the conjecture that the dual basis of the polynomial scattering
equations must contain one constant term.
\end{abstract}

\maketitle

\section{Introduction}
The Cachazo, He and Yuan (CHY) formulation 
\cite{Cachazo:2013gna,Cachazo:2013hca,Cachazo:2013iea,Cachazo:2014nsa,Cachazo:2014xea}
of the perturbative $S$-matrix relies on a collection of rational maps $\{f_a\}$
from the space of massless kinematic configurations to the moduli space 
$\mc M_{0,n}$ of Riemann spheres with $n$ marked points $z_a\in\CP^1$ associated
with each of the external particles. Particularly, the set $\mc S$ of the
$(n-3)!$ solutions to simultaneous constraints,
\begin{align}
f_a(z,k) = \sum_{b\neq a}\frac{k_a\cdot k_b}{z_a-z_b} = 0\;, \quad
\forall a\in A = \{1,2,\dots,n\}\;.
\label{EQ_SCATTERING_EQS}
\end{align}
dubbed as the {\it scattering equations}, provides a very intriguing basis for
decomposing massless scattering processes in generic quantum field theories. The
external particles have momenta $k_a$ and polarizations $\eps_a$. Scattering
amplitudes in arbitrary dimension are expressed as a multidimensional integral
of a certain rational function on $\mc M_{0,n}$
\cite{Cachazo:2013gna,Cachazo:2013hca,Cachazo:2013iea,Cachazo:2014nsa,Cachazo:2014xea}.
At the heart of the formalism lies the principle that the integration is
localized on the support of the scattering equations. We write amplitudes as
\begin{align}
\mc A_n^\tree(\{k_a,\eps_a\}) = 
\int d\Omega_\CHY\mc I(\{z_a\},\{k_a,\eps_a\})\;,
\label{EQ_CHY_FORMULA}
\end{align}
where $d\Omega_\CHY$ is the integration measure,
\begin{align}
d\Omega_\CHY\equiv
\frac{d^nz}{\vol\SL(2,\C)}
{\prod_a}'\delta(f_a)\;,
\label{EQ_CHY_MEASURE}
\end{align}
and $\mc I$ is referred to as the CHY integrand. The latter is a rational
function of the marked points $z_a$ and $d\Omega_\CHY$ is constructed from the
$f_a$'s. The philosophy is that for a given theory in consideration, for example
Yang-Mills theory, there exists is a compact integrand $\mc I$ such that
eq.~\eqref{EQ_CHY_FORMULA} reproduces the correct $S$-matrix. 

In order to briefly explain the notation in eq.~\eqref{EQ_CHY_MEASURE}, we
remark that $\SL(2,\C)$ invariance implies that imposition of merely any $(n-3)$
of the scattering equations suffices to restrict the solution to
eq.~\eqref{EQ_SCATTERING_EQS}. Evidently,
\begin{align}
{\prod_a}'\delta(f_a)\equiv
z_{ij}z_{jk}z_{ki}
\prod_{a\in A\backslash\{i,j,k\}}\delta(f_a)\;,
\end{align}
where the labels $i,j,k$ specify the arbitrary choice of the the three
extraneous scattering equations to be disregarded. Throughout this paper
$z_{ab}\equiv z_a-z_b$. Moreover, the $\SL(2,\C)$ redundancy is explicitly
quotiented out by fixing the values of, say, $z_r$, $z_s$ and $z_t$.

For the purpose of investigating aspects of the CHY formalism via algebraic
geometry, it is essential to interpret eq.~\eqref{EQ_CHY_FORMULA} as a
multivariate {\it global residue} with respect to the polynomial form of the
scattering equations derived by Dolan and Goddard \cite{Dolan:2014ega}. Let
$h_m$ be the multilinear homogeneous polynomial of degree $m$ defined by
\begin{align}
h_m = \frac{1}{m!}\sum_{
\substack{
a_1,a_2,\dots,a_m\in A' \\
a_i\neq a_j}}
\sigma_{a_1a_2\cdots a_m}z_{a_1}z_{a_2}\cdots z_{a_m}\;,
\label{EQ_DOLAN_GODDARD}
\end{align}
for $A' = \{2,\dots,n-1\}$ and $\sigma_{a_1a_2\cdots a_m}\equiv
k_{1a_1a_2\cdots a_m}^2$. Then eqs.~\eqref{EQ_SCATTERING_EQS} subject to the
partial gauge fixing $z_1\to 0$ and $z_n\to\infty$ are equivalent to the
polynomial equations,
\begin{align}
h_m = 0\;, \quad
1\leq m \leq n-3\;.
\label{EQ_CHY_POLY}
\end{align}
In terms of the $h_m$'s, eq.~\eqref{EQ_CHY_FORMULA} can be rewritten as
$(n-3)$-fold integral over a contour $\mc O$ encircling all points in $\mc S$,
with the replacements $\mc I\to\tilde{\mc I}$ and $d\Omega_\CHY\to
d\tilde{\Omega}_\CHY$,
\begin{align}
\tilde{\mc I}\equiv {} & 
\mc I\prod_{a\in A}(z_a-z_{a+1})^2\;, \\[-1mm]
d\tilde{\Omega}_\CHY\equiv {} & 
\frac{z_2}{z_{n-1}}
\prod_{m=1}^{n-3}\frac{1}{h_m(z,k)} \\[-1mm] & \times
\prod_{2\leq a<b\leq n-1}\!(z_a-z_b)
\prod_{a=2}^{n-2}\frac{z_adz_{a+1}}{(z_a-z_{a+1})^2}\;. \nn
\label{EQ_CHY_DG}
\end{align}

The CHY literature is by now fairly extensive. We suggest a partial list
\cite{Dolan:2014ega,Weinzierl:2014vwa,Dolan:2013isa,Weinzierl:2014ava,Dolan:2015iln,Baadsgaard:2015ifa,Naculich:2014naa,Naculich:2015zha,Bjerrum-Bohr:2014qwa,Gomez:2016bmv,Cardona:2016bpi,Du:2016blz}
of recent developments. The loop-level generalization is addressed in
ref.~\cite{Adamo:2013tsa,Geyer:2015bja,Geyer:2015jch,He:2015yua,Cachazo:2015aol,Casali:2014hfa}.
Although eqs.~\eqref{EQ_SCATTERING_EQS} look very simple, it is a formidable
task to solve them to actually compute amplitudes. This problem has received
considerable attention recently, and it is now clear that the explicit solutions
can be bypassed completely. We mention the integration rules
\cite{Baadsgaard:2015voa,Baadsgaard:2015hia,Bjerrum-Bohr:2016juj,Feng:2016nrf,Huang:2016zzb}
and various other approaches
\cite{Kalousios:2015fya,Cardona:2015ouc,Huang:2015yka,Cardona:2015eba}. See also
refs.~\cite{Lam:2015sqb,Lam:2016tlk} for related progress. In
ref.~\cite{Sogaard:2015dba} two of the present authors proposed an advantageous
alternative offered by the Bezoutian matrix method from computational algebraic
geometry. We will revisit this part later. 

The main result of this paper is that the polynomial scattering equations
\eqref{EQ_DOLAN_GODDARD} automatically form an $H$-basis for the
zero-dimensional ideal $I = \avg{h_1,\dots,h_{n-3}}$.  

\section{H and G(r{\"obner}) Bases}
We consider the ring $R = \K[z_1,\dots,z_n]$ of polynomials in $n$ variables
$z_1,\dots,z_n$ over a field $\K$. Typically, $\K = \C$ or $\K = \Q$. Here we
follow refs.~\cite{HBasisI,CDS}.

Let $P_d$ be the subset of all polynomials of degree $d$ or less, and $S_d$ be
the subset of homogeneous polynomials of degree $d$. We have the direct sum
decomposition,
\begin{equation}
  \label{eq:1}
  P_d=\bigoplus_{i=0}^{d} S_d\;.
\end{equation}

Consider an ideal $I$ generated by polynomials $f_1,\dots,f_k$,
$I=\langle f_1,\dots,f_k \rangle$. So for any $f\in I$, 
\begin{equation}
  \label{eq:3}
  f= \sum_{i=1}^k q_i f_i, \quad q\in R\;.
\end{equation}
In practice, it is much easier to carry out polynomial reduction if 
\begin{equation}
  \label{degree_condition}
  \max_{i=1}^k \deg (q_i f_i) = \deg f\;,
\end{equation}
since the quotients $q_i$'s degrees are under control. However, the condition
\eqref{degree_condition} in general may not be satisfied for any set of
generators for $I$. Hence F. Macaulay \cite{Macaulay} defined the $H$-basis of
an ideal:
\begin{definition}
\label{hbasis_def}
We say that $\{f_1,\dots,f_k\}\subset I$ is an H-basis of an ideal
$I\subseteq R$, if $\forall f\in I$, $\exists q_1,\dots,q_k\in R$ such
that $f = \sum_{i=1}^k q_i f_i$, and 
\begin{align}
\max_{i=1}^k\deg(q_if_i) = \deg f\;.
\end{align}
\end{definition}

By this definition, the condition of being an $H$-basis is equivalent to that
$P_d \cap I$ is generated as,
\begin{equation}
  \label{degree_generator}
  P_d \cap I = {\rm span}_{\mathbb K} \{a f_i\}\;,
  \quad \forall a \in R,\ \deg a \leq d-\deg f_i\;.
\end{equation}

For any polynomial $f\in R$, define the {\it initial form} of $f$, $\In(f)$,
as the homogeneous part of $f$, with the degree $\deg f$. The condition of being
an $H$-basis can be further reformulated as \cite{HBasisI},
\begin{equation}
  \label{eq:4}
  \langle \In(I) \rangle = \langle \In(f_1),\ldots, \In(f_k)\rangle\;,
\end{equation}
where $\In(I)$ is the collection of initial forms of all polynomials in $I$. 

When the number of generators equals the number of variables, there is a simple
way to check if the generator set is an $H$-basis \cite{HBasisI,CDS}:
\begin{thm}
$\mc H\equiv\{f_1,\dots,f_k\}$ is an $H$-basis for the ideal  
$I=\langle f_1,\dots,f_k\rangle$, provided that $(0,\dots,0)$ is the only
simultaneous zero of the initial forms $\In(f_1),\dots,\In(f_k)$.
\label{THM_HBASIS}
\end{thm}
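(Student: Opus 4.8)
The plan is to reduce everything to the criterion in eq.~\eqref{eq:4}: $\mc H$ is an $H$-basis exactly when $\langle\In(I)\rangle=\langle\In(f_1),\dots,\In(f_k)\rangle$. One inclusion is immediate, since each $\In(f_i)$ is the initial form of an element of $I$; so the content is to show that for \emph{every} $f\in I$ the initial form $\In(f)$ already lies in $J\equiv\langle g_1,\dots,g_k\rangle$, where I abbreviate $g_i\equiv\In(f_i)$, homogeneous of degree $d_i\equiv\deg f_i$.

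First I would fix $f\in I$ and, among all expansions $f=\sum_{i=1}^k q_if_i$, pick one minimizing $D\equiv\max_i\deg(q_if_i)$; the goal is then to prove $D=\deg f$, after which $\In(f)$ is simply the degree-$D$ part of $\sum_i q_if_i$, which is $\sum_{i:\deg(q_if_i)=D}\In(q_i)\,g_i\in J$. Suppose instead $D>\deg f$. Then the degree-$D$ homogeneous component of $\sum_i q_if_i$ vanishes, and since the top-degree part of a product is the product of the top-degree parts, this says $\sum_{i\in S}\In(q_i)\,g_i=0$, where $S=\{i:\deg(q_if_i)=D\}$ and each $\In(q_i)$ is homogeneous of degree $D-d_i$. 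In other words the leading coefficients $p_i\equiv\In(q_i)$ (and $p_i\equiv0$ for $i\notin S$) constitute a homogeneous syzygy of $(g_1,\dots,g_k)$. Here the hypothesis does its job: since the $g_i$ have no common zero besides the origin and there are $n$ of them in $n$ variables, they form a homogeneous system of parameters for $R$, hence --- $R$ being Cohen--Macaulay --- a regular sequence, so the Koszul complex on the $g_i$ is exact and the syzygy must be a combination of the trivial relations $g_i e_j-g_j e_i$: concretely $p_i=\sum_j c_{ij}g_j$ with $c_{ij}=-c_{ji}$ homogeneous. Replacing $q_i$ by $\tilde q_i\equiv q_i-\sum_j c_{ij}f_j$ leaves $\sum_i\tilde q_if_i=f$ unchanged (the double sum cancels by antisymmetry of $c_{ij}$) while stripping off the degree-$(D-d_i)$ part of $q_i$, since that part equals $p_i=\sum_j c_{ij}g_j$. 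Thus $\max_i\deg(\tilde q_if_i)<D$, contradicting minimality of $D$; hence $D=\deg f$ and we are done.

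The hard part is the middle link --- promoting the leading-coefficient relation $\sum p_ig_i=0$ to the statement that it is a Koszul syzygy; the surrounding degree bookkeeping is routine. This is exactly where the hypothesis enters, through the chain ``only the origin is a common zero'' $\Lra$ ``homogeneous system of parameters'' $\Lra$ ``regular sequence'' $\Lra$ ``Koszul complex exact'', and it is also why $k=n$ is not incidental: with fewer generators than variables the $g_i$ could never cut out only the origin. I would also dispose of the small cases of $|S|$ separately --- $|S|=1$ is impossible because $R$ is a domain, so $p_{i_0}g_{i_0}=0$ forces $p_{i_0}=0$, and $|S|=0$ cannot occur by the definition of $D$ unless $f=0$, which is trivial.
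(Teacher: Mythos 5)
The paper never proves Theorem~\ref{THM_HBASIS}; it simply quotes it from refs.~\cite{HBasisI,CDS}. So there is no internal proof to compare against --- what you have supplied is a self-contained proof of the cited result, and it is correct; it is also essentially the classical Macaulay/M\"oller--Sauer argument. The structure is sound: descend on $D=\max_i\deg(q_if_i)$ (which is bounded below by $\deg f$, so a minimizer exists); if $D>\deg f$ the top-degree parts produce a homogeneous syzygy $\sum_i p_i\,g_i=0$ of the initial forms; the hypothesis enters exactly through the chain ``common zero only at the origin $\Rightarrow$ homogeneous system of parameters (here $k=n$ is essential, and Krull's height theorem indeed rules out $k<n$) $\Rightarrow$ regular sequence in the Cohen--Macaulay ring $R$ $\Rightarrow$ Koszul complex exact $\Rightarrow$ the syzygy is a combination of the trivial ones''; and the correction $\tilde q_i=q_i-\sum_j c_{ij}f_j$ preserves $\sum_i\tilde q_if_i=f$ by antisymmetry while killing the degree-$(D-d_i)$ part of every $q_i$ (including $i\notin S$, where $p_i=0$), so the maximum genuinely drops. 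Your side remarks on $|S|\le 1$ are unnecessary but harmless. One pedantic caveat worth a half-sentence in a written-up version: ``only simultaneous zero is the origin'' must be read over the algebraic closure of $\K$ for the implication to height $n$ to hold (compare $x^2+y^2$ over $\R$); for the paper's field $\K=\C$ this is automatic, and the paper itself makes the analogous remark in its proof of Theorem~\ref{THM_PSE_HBASIS}.
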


With an $H$-basis, many problems in commutative algebra can be translated to
linear algebra problems by eq.~\eqref{degree_generator}. We remark that the
$H$-basis, in many aspects, resembles the Gr\"obner basis ($G$-basis). They both
rely on the order of monomials. An $H$-basis sorts monomials by the degree,
while a $G$-basis sorts monomials by a total monomial order $\succ$. The
definition of a $G$-basis can be rephrased in a similar form of
def.~\ref{hbasis_def}: $\{g_1,\dots,g_k\}\subset I$ is a $G$-basis of an ideal
$I\subseteq R$ with respect to $\succ$, if $\forall f\in I$, $\exists
q_1,\dots,q_k\in R$ such that $f = \sum_{i=1}^k q_ig_i$, and 
\begin{equation}
\max_{i=1}^k \LT(q_ig_i) = {}  \LT(f)\;,
\end{equation}
where $\LT$ stands for the highest monomial with respect to $\succ$, and
``$\max$'' and ``$=$'' are to be understood in the context of this monomial
order.

The $G$-basis concept can be considered as a refined version of the $H$-basis,
since roughly speaking, it converts commutative algebra problems to
one-dimensional linear algebra problems. On the other hand, in many cases, like
the high-point scattering equations, the computation to obtain a $G$-basis is
heavy. Furthermore, for symmetric ideals, the $G$-basis introduces an artificial
order between variables which explicitly breaks the symmetry. 

\section{The Polynomial Form of the Scattering Equations is an H-basis}
We are now ready to prove our principal result.
\begin{thm}
The polynomial scattering equations $h_m$, $1\leq m\leq n-3$, form an $H$-basis.
\label{THM_PSE_HBASIS}
\end{thm}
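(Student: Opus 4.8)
The plan is to apply Theorem~\ref{THM_HBASIS} directly: since there are exactly $n-3$ polynomials $h_1,\dots,h_{n-3}$ in the $n-3$ effective variables $z_2,\dots,z_{n-1}$ (recall $z_1\to 0$ and $z_n\to\infty$ have been fixed, and $z_r,z_s,z_t$ were used for the $\SL(2,\C)$ gauge — so after the Dolan--Goddard reduction the relevant polynomial system lives in $n-3$ variables), it suffices to show that the initial forms $\In(h_1),\dots,\In(h_{n-3})$ have only the trivial common zero. The crucial observation is that each $h_m$ as written in eq.~\eqref{EQ_DOLAN_GODDARD} is \emph{already} homogeneous of degree $m$ in the $z$'s, so $\In(h_m)=h_m$ and the claim reduces to: the homogeneous system $h_1=\dots=h_{n-3}=0$ has only $z=0$ as a solution in the affine space with coordinates $z_2,\dots,z_{n-1}$.

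First I would recall why $\{h_m=0\}$ is a zero-dimensional system with exactly $(n-3)!$ solutions: this is the content of Dolan--Goddard \cite{Dolan:2014ega}, where the $h_m=0$ are equivalent to the scattering equations \eqref{EQ_SCATTERING_EQS} under the gauge $z_1\to 0$, $z_n\to\infty$, and the number of solutions is $(n-3)!$ by the classical count. Next I would argue that the $h_m$ being homogeneous means the affine variety they cut out is a cone; a nonzero point $z^\ast$ on it would force the entire line $\lambda z^\ast$ to lie in the variety, contradicting zero-dimensionality (finitely many points). Hence $(0,\dots,0)$ is the only common zero of the $\In(h_m)=h_m$, and Theorem~\ref{THM_HBASIS} immediately yields that $\mc H=\{h_1,\dots,h_{n-3}\}$ is an $H$-basis of $I=\avg{h_1,\dots,h_{n-3}}$.

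The one point that genuinely requires care — and which I expect to be the main obstacle — is making precise that the system is zero-dimensional \emph{in the correct set of variables}, i.e.\ that after the partial gauge fixing the $h_m$ really do depend on exactly $n-3$ variables and that the Bézout-type count $(n-3)!$ (the product $1\cdot 2\cdots(n-3)$ of the degrees) is attained with no solutions escaping to infinity in the remaining directions. Equivalently, one must check that the leading (here: full) homogeneous parts already have no nontrivial common zero, which is exactly the statement to be proven, so the logic must be routed through the independently known solution count of Dolan--Goddard rather than through a naïve genericity argument. Once that is in place, the homogeneity of $h_m$ does all the remaining work, and the proof is short: $\In(h_m)=h_m$, a homogeneous zero-dimensional ideal has only the origin as common zero of its generators, and \eqref{eq:4} (or equivalently Theorem~\ref{THM_HBASIS}) closes the argument.
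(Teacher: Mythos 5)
There is a genuine gap, and it sits exactly where you place your ``crucial observation.'' After the Dolan--Goddard gauge fixing $z_1\to 0$, $z_n\to\infty$, the polynomials $h_m$ live in the $n-2$ variables $z_2,\dots,z_{n-1}$ (not $n-3$), and in those variables they are indeed homogeneous --- but then the ideal $\avg{h_1,\dots,h_{n-3}}$ is \emph{not} zero-dimensional: a residual scaling symmetry survives, so the common zero locus is a cone consisting of $(n-3)!$ lines through the origin. Already for $n=4$ the single equation $h_1=\sigma_2 z_2+\sigma_3 z_3=0$ cuts out a line in $\C^2$, not the origin. The $(n-3)!$ isolated solutions appear only after fixing one more variable, e.g.\ $z_2\to 1$, and in the resulting $n-3$ variables $z_3,\dots,z_{n-1}$ the $h_m$ are no longer homogeneous: $\In(h_m)$ is the degree-$m$ truncation of $h_m$ obtained by keeping only the terms not involving $z_2$. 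So the two facts you want to combine --- homogeneity of the $h_m$ and zero-dimensionality of the ideal --- never hold in the same coordinate system, and the step ``$\In(h_m)=h_m$, hence the cone argument applies'' collapses. (In the $(n-2)$-variable homogeneous picture the claim is also essentially vacuous: any homogeneous generating set of a homogeneous ideal is automatically an $H$-basis, which is not the statement the paper needs for residue computations.)

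Your closing paragraph in fact concedes the real issue: one must show that the initial forms of the \emph{gauge-fixed} system have no common zero besides the origin, i.e.\ that no solutions escape to infinity, and this is not implied by the mere existence of $(n-3)!$ affine solutions --- a system can have finitely many affine zeros and still have zeros at infinity, which is precisely the failure mode of an $H$-basis. Routing the argument through the Dolan--Goddard count could in principle work if you knew the count \emph{with multiplicity} saturates the B\'ezout bound $\prod_{m=1}^{n-3}m$, but you do not establish that, and the paper does not go that way. Instead it proves $V(\langle\In(h_1),\dots,\In(h_{n-3})\rangle)=\{(0,\dots,0)\}$ directly: after $z_2\to 1$, $\In(h_{n-3})$ is the single monomial $\sigma_{3\cdots(n-1)}z_3\cdots z_{n-1}$, and by multiplying the lower initial forms by suitable square-free monomials one shows iteratively that $z_3\cdots z_{n-1-i}$, and by permutation symmetry eventually each $z_a$, lies in the radical of the initial-form ideal (using genericity of the kinematic invariants $\sigma$). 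That explicit combinatorial step is what is missing from your argument.
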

\begin{proof}
We show that $V(J)\equiv V(\langle \In(h_1),\dots,\In(h_{n-3}) \rangle)$, the
zero locus of the initial forms of the polynomial scattering equations, consists
of only the point $(0,\dots,0)$, after which the result follows from Theorem
\ref{THM_HBASIS}. Note that since our field of interest $\K = \C$ is
algebraically closed, $V(J)=V(\sqrt{J})$, so it suffices to consider the radical
ideal $\sqrt{J} \supset J$. From inspection of eq.~\eqref{EQ_DOLAN_GODDARD},
after applying the gauge fixing $z_2\to 1$, we have
\begin{align}
\In(h_m) = 
\frac{1}{m!}\sum_{ 
\substack{
a_1,a_2,\dots,a_m\in A'' \\
a_i\neq a_j}}
\sigma_{a_1a_2\cdots a_m}z_{a_1}z_{a_2}\cdots z_{a_m}\;,
\end{align}
where $A'' = \{3,\dots,n-1\}$. By iteratively considering
\begin{equation}
z_{3}\cdots z_{n-1-i}\,h_{n-3-i} \in \sqrt{J}
\end{equation}
for $i=0,\dots,n-4,$ we find that, after each step, $z_{3}^{2}\cdots
z_{n-1-i}^{2} \in \sqrt{J}$ and hence $z_{3}\cdots z_{n-1-i} \in \sqrt{J}$. As
the same works for permutations of the indices, we realize that 
$z_{3},\dots,z_{n-1}\in\sqrt{J}$ and thus 
\begin{equation}
V(\sqrt{J})=V(\langle z_3,\dots,z_{n-1} \rangle) = \{(0,\dots,0)\}\;.
\end{equation}
\end{proof}

\section{Integrand Reduction}
One of the crucial features of an $H$-basis is that we can perform polynomial
reduction towards it, in a similar way as the $G$-basis. Here we do not need the
reduction algorithm for a generic $H$-basis \cite{HBasisI}, but only the case of
our interest, i.e. the polynomial scattering equations as an $H$-basis. The goal
is to reduce an arbitrary polynomial to a polynomial with degree low enough.

Let $\{f_1,\ldots f_n\}$ be an $H$-basis in $n$ variables $z_1,\dots,z_n$ and
let $I=\langle f_1,\ldots f_n \rangle$ so $\In(I)$ is the ideal of the initial
forms. As $V(\In(I))=\{(0,\ldots,0)\}$, by Hilbert's Nullstellensatz
\cite{MR1417938}, there exists a positive integer $d^*$ such that 
\begin{equation}
  \label{eq:2}
  S_{d} \subset \In(I), \quad \forall d>d^*.
\end{equation}
Now we estimate $d^*$ for scattering equations:
\begin{thm}
For the $n$-point tree-level scattering equations, define 
$d^*=(n-3)(n-4)/2$. Then $S_{d} \subset \In(I)$, $\forall d>d^*$.
\label{CRITICAL_DEGREE}
\end{thm}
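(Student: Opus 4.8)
The plan is to reduce the estimate of $d^*$ to a question about the ideal of initial forms $J = \langle \In(h_1),\dots,\In(h_{n-3})\rangle$ in the $n-3$ variables $z_3,\dots,z_{n-1}$ (after the gauge fixing $z_1\to 0$, $z_n\to\infty$, $z_2\to 1$), and to bound the largest degree in which $S_d \not\subset J$ using the structure exposed in the proof of Theorem~\ref{THM_PSE_HBASIS}. The key fact from that proof is the chain of inclusions: $z_3\cdots z_{n-1}\,\In(h_1)$, or more precisely the iterated products $z_3\cdots z_{n-1-i}\,\In(h_{n-3-i})$, force $z_3^2\cdots z_{n-1-i}^2 \in J$ (not $\sqrt J$) already at the relevant degree, and symmetrizing over index permutations one obtains that every squarefree monomial $z_{a_1}\cdots z_{a_r}$ with $r$ distinct indices, times suitable cofactors, lies in $J$ in a controlled degree. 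So first I would track degrees carefully through that iteration: $\In(h_{n-3})$ has degree $n-3$, multiplying by $z_3\cdots z_4$ (two factors) lands us in degree $n-1$ where we extract $z_3^2 z_4^2\cdots$, and at stage $i$ the degree involved is roughly $(n-3-i) + (n-3-i) = 2(n-3-i)$; the sum $\sum_{i}$ of the incremental degrees needed is what produces $(n-3)(n-4)/2$.

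Concretely, the second step is to show that $\prod_{a=3}^{n-1} z_a \in J$ in some degree $D$, and then that $S_d \subset J$ for all $d > D - (n-3) + (\text{something})$; the cleanest route is to prove that $z_a^{N} \in J$ for each single variable $z_a$ with $N$ as small as possible, since once $z_a^{N_a}\in J$ for all $a$ one gets $S_d\subset J$ for $d \ge \sum_a (N_a - 1) + 1 = \sum_a N_a - (n-3) + 1$ by the pigeonhole/staircase argument (any monomial of that degree is divisible by some $z_a^{N_a}$). So I would aim to establish $z_a^{N}\in J$ with $N$ chosen so that $\sum_a N_a - (n-3) + 1 = d^*+1$, i.e. on average $N = \tfrac{(n-3)(n-4)/2 + (n-3) - 1}{n-3} \approx \tfrac{n-2}{2}$. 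The mechanism: from $z_3\cdots z_{n-1}\,\In(h_1) \in J$ and $\In(h_1) = \sum_{a} \sigma_a z_a$ (gauge-fixed, $a$ ranging over $A''$), one isolates $z_3^2\cdots z_{n-1}^2/z_a$-type terms; iterating and using genericity of the $\sigma$'s (so the relevant subdeterminants are nonzero) peels off variables one at a time. Each ``peel'' costs one extra multiplication by a product of the surviving variables, and counting these costs $1 + 2 + \dots + (n-4) = (n-3)(n-4)/2$ extra degree beyond the top degree $n-3$ of the generators — but the statement is about $S_d \subset \In(I)$, so one must then verify that reaching $z_a^{N_a}\in J$ at these degrees indeed gives the claimed threshold. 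I would organize this as an induction on $n$ (or on the number of variables still to be eliminated), with the inductive hypothesis supplying the bound for the ``smaller'' scattering-equation ideal obtained after setting one more variable's power to lie in $J$.

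The main obstacle I anticipate is the genericity/nonvanishing input: the argument that multiplying $\In(h_m)$ by an appropriate squarefree monomial and reducing yields a pure power $z_a^2\cdots$ relies on certain polynomials in the Mandelstam invariants $\sigma_{a_1\cdots a_m}$ being nonzero, and one must either argue these hold for generic massless kinematics (Zariski-dense, which suffices since $V(\sqrt J)=\{0\}$ is a closed condition and the target statement $S_d\subset \In(I)$ is likewise generic-stable) or exhibit the identities explicitly. A secondary subtlety is that $\In(h_m)$ need not equal the naive ``drop $z_2$'' expression if cancellations occur, but the paper's proof of Theorem~\ref{THM_PSE_HBASIS} already records that $\In(h_m)$ is the multilinear form over $A''=\{3,\dots,n-1\}$, so I would quote that. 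Finally, I should double-check the edge of the induction ($n=4$, where $d^*=0$ and there is a single variable with $\In(h_1)$ linear, so $S_d \subset \In(I)$ for all $d\ge 1$, consistent with $d>0$) and ($n=5$, $d^*=1$) to fix the constant in the degree count; it is exactly this bookkeeping of the ``$+1$'' and the cofactor degrees that makes $(n-3)(n-4)/2$ come out rather than an off-by-one variant.
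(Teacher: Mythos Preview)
Your plan has a genuine gap: the pigeonhole route through single–variable powers $z_a^{N_a}\in J$ cannot reach the sharp threshold $d^*=(n-3)(n-4)/2$. Concretely, the smallest $N$ with $z_a^{N}\in J=\langle \In(h_1),\dots,\In(h_{n-3})\rangle$ is $N=n-3$ for each variable (this is already the case for the model example where the $\In(h_m)$ are the elementary symmetric polynomials in $z_3,\dots,z_{n-1}$, via the identity $\prod_b(t-z_b)|_{t=z_a}=0$; generic $\sigma$–coefficients do not improve this). Pigeonhole then only gives $S_d\subset J$ for $d>(n-3)(n-4)$, a factor of two too weak. Your own back–of–the–envelope target $N\approx (n-2)/2$ is therefore unattainable: for $n=6$ one checks directly that $z_a^2\notin J$ generically, so $N_a\ge 3$ and your scheme yields $d>6$ instead of the correct $d>3$. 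Tracking degrees through the $\sqrt J$–iteration of Theorem~\ref{THM_PSE_HBASIS} does not help either, because each step involves squaring (you get $(\prod_{a\in T}z_a)^2\in J$, not the product itself), and iterating this produces exponential rather than linear degree growth.

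The paper's proof bypasses all of this by quoting a standard fact: since the $\In(h_m)$ are homogeneous with $V(\In(h_1),\dots,\In(h_{n-3}))=\{0\}$, they form a regular sequence and $R/\langle\In(I)\rangle$ is a graded Artinian complete intersection with Hilbert series $\prod_{m=1}^{n-3}(1+t+\cdots+t^{m-1})$, whose top degree is $\sum_{m=1}^{n-3}(m-1)=(n-3)(n-4)/2$. Hence $A_d=0$ for $d>d^*$, i.e.\ $S_d\subset\In(I)$. If you want an elementary argument, you should aim for this Hilbert–series statement (or the Koszul exactness that underlies it) rather than single–variable powers; the information you need is genuinely about how the generators interact, not about any one variable in isolation.
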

\begin{proof}
Homogeneity of the initial forms implies that 
$R/\langle \In(I)\rangle=\bigoplus_{i=0}^\infty A_d$ is a graded algebra. Then,
\begin{equation}
  \label{eq:5}
  \dim_{\mathbb K}A_d =0\;,\quad \text{if } d>\sum_{i=1}^{k} \deg f_i -n\;,
\end{equation}
when $V(\In(I))=\{(0,\ldots,0)\}$ \cite{MR2161985}. The theorem follows
immediately from the degree counting of the scattering equations.
\end{proof}

The upshot of the above discussion is that we can reduce any desired polynomial
with degree larger than $d^*=(n-3)(n-4)/2$ to a polynomial with degree equal or
less than $d^*$ towards an $H$-basis. The algorithm is recursive: given a
polynomial $f$ with $\deg f=d>d^*$, since $\In(f) \in \langle\In(I)\rangle $, 
\begin{equation}
  \label{eq:9}
  \In(f) =\sum_{i=1}^{n-3} q_i^{(d)} \In(f_i)\;,
\end{equation}
where $\deg q_i^{(d)}=d-\deg f_i$. Since the degrees are confined by virtue of
the $H$-basis, the $q_i^{(d)}$'s can be obtained by just solving a finite system
of linear equations. Define
\begin{equation}
  \label{eq:10}
  \tilde f =f- \sum_{i=1}^{n-3} q_i^{(d)} f_i\;.
\end{equation}
By construction, $\tilde f$ is polynomial with degree less than $d$, because we
subtracted off the leading part. Repeat at most $d-d^*$ times and collect the
intermediate coefficients, 
\begin{equation}
  \label{se_reduction}
  f=\sum_{i=1}^{n-3} q_i f_i + r\;,
\end{equation}
where $r$---the remainder after the division---is a polynomial with the degree
at most $(n-3)(n-4)/2$. With this procedure, $r$ is uniquely determined as a
consequence of the $H$-basis.

\section{Global Residues and the Bezoutian}
The $H$-basis property of the polynomial scattering equations allow us to prove
an exciting empirical observation \cite{Sogaard:2015dba}: the CHY formula
produces a global residue proportional to merely a single monomial coefficient,
and the constant of proportionality is universal. 

For the benefit of the reader, we recall the notion of a global residue with
emphasis on polynomials. We refer to the classical text books
\cite{MR507725,MR0463157} and related applications \cite{Sogaard:2015dba} (see
also
refs.~\cite{Sogaard:2013fpa,Sogaard:2014ila,Sogaard:2014oka,Johansson:2015ava}).
A individual (local) residue may require algebraic extensions such as
$\sqrt{2}$. On the contrary, a global residue is a manifestly rational quantity
of the monomial coefficients. Let $I = \avg{f_1,\dots,f_n}$ be a
zero-dimensional polynomial ideal, so $R/I$ is a finite-dimensional $\C$-linear
space. The global residue is a linear map $R/I\to\C$ that computes the total sum
of individual residues,
\begin{align}
\Res_{\{f_1,\dots,f_n\}}(H)\equiv
\sum_{\xi_i\in\mc Z(I)}\Res_{\{f_1,\dots,f_n\},\xi_i}(H)\;.
\label{EQ_GLOBAL_RESIDUE}
\end{align}
We can make connection with the CHY formalism and the scattering equations by
noting that
\begin{align}
H(z)\equiv 
\frac{z_2}{z_{n-1}}
\prod_{2\leq a<b\leq n-1}\!(z_a-z_b)
\prod_{a=2}^{n-2}\frac{z_a}{(z_a-z_{a+1})^2}
\times\tilde{\mc I}\;,
\end{align}
and therefore, eq.~\eqref{EQ_CHY_FORMULA} equals
$\Res{}_{\{h_1,\dots,h_{n-3}\}}(N)$. In particular, if $N$ is not a polynomial,
but a rational function $N=N_1/N_2$ and $N_2$ is nonvanishing on $\mc Z(I)$,
then
\cite{Sogaard:2015dba}
\begin{align}
\Res_{\{f_1,\dots,f_n\}}(H_1/H_2)\equiv
\Res_{\{f_1,\dots,f_n\}}(H_1G_2)\;.
\end{align}
Here, $G_2$ is the polynomial inverse of $N_2$ modulo $I$.

Efficient algebraic evaluation of global residues without computing the
individual residues is facilitated by the following theorem \cite{MR507725}.
\begin{thm}[Global Duality]
$\avg{\bullet,\bullet} : R/I\otimes R/I\to\C$ defined by
\begin{align}
\avg{N_1,N_2}\equiv\Res(N_1\cdot N_2)
\end{align}
is a nondegenerate inner product.
\end{thm}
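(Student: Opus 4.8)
The plan is to reduce the global statement to a purely local one and then exploit the Gorenstein structure of a zero-dimensional complete intersection. Since $I=\avg{f_1,\dots,f_n}$ is zero-dimensional and $\K=\C$ is algebraically closed, the Chinese Remainder Theorem for Artinian rings gives a canonical decomposition $R/I\cong\bigoplus_{\xi\in\mc Z(I)}A_\xi$, where $A_\xi$ is the local Artinian $\C$-algebra at $\xi$ with maximal ideal $\mathfrak m_\xi$. The pairing $\avg{\bullet,\bullet}$ is symmetric and bilinear (and well defined on $R/I$ because $\Res$ is), so the only content is nondegeneracy. By \eqref{EQ_GLOBAL_RESIDUE} one has $\Res=\sum_\xi\Res_\xi$, and each $\Res_\xi$ factors through the projection $R/I\to A_\xi$; hence $\avg{\bullet,\bullet}$ is the orthogonal direct sum over $\xi$ of the local pairings $\avg{a,b}_\xi\equiv\Res_\xi(a\cdot b)$ on $A_\xi$. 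It therefore suffices to prove that each local pairing is nondegenerate.

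As a warm-up, consider the generic situation in which every $\xi\in\mc Z(I)$ is a simple zero, so $A_\xi=\C$ and $\Res_\xi(H)=H(\xi)/J(\xi)$ with $J=\det(\partial f_i/\partial z_j)$ the Jacobian, nonvanishing on $\mc Z(I)$. Here $\dim_\C R/I=|\mc Z(I)|$, and the evaluation map $R/I\to\C^{|\mc Z(I)|}$, $N\mapsto(N(\xi))_\xi$, is an isomorphism by Lagrange interpolation; let $e_\xi\in R/I$ be the corresponding idempotents, $e_\xi(\xi')=\delta_{\xi\xi'}$. If $\avg{N_1,N_2}=0$ for all $N_2$, then taking $N_2=e_\xi$ gives $N_1(\xi)/J(\xi)=0$, hence $N_1(\xi)=0$ for every $\xi$, i.e. $N_1=0$ in $R/I$. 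This settles the reduced case, which already covers the applications in this paper.

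For the general, possibly non-reduced, case I would invoke the fact that a local complete intersection is Gorenstein: $A_\xi$ has a one-dimensional socle $\mathrm{soc}(A_\xi)=(0:\mathfrak m_\xi)$. A standard criterion (equivalently, Grothendieck local duality) then states that a linear functional $\lambda:A_\xi\to\C$ induces a nondegenerate bilinear form $(a,b)\mapsto\lambda(ab)$ precisely when $\lambda$ does not annihilate the socle. The remaining step is thus to check $\Res_\xi|_{\mathrm{soc}(A_\xi)}\neq0$. This is where the residue transformation law enters: it lets one pass from $(f_1,\dots,f_n)$ to a monomial system $(z_1^{m_1},\dots,z_n^{m_n})$ with the same local algebra, for which the socle is spanned by the class of $z_1^{m_1-1}\cdots z_n^{m_n-1}$ and the local residue on it equals $1$; transporting back, $\Res_\xi$ evaluates on the socle (generated by the Bezoutian/Jacobian class) to the local multiplicity, a positive integer, hence is nonzero. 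Combined with the direct-sum decomposition of the first paragraph, this gives nondegeneracy of $\avg{\bullet,\bullet}$ on all of $R/I$.

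The hard part is the last step—nonvanishing of $\Res_\xi$ on the socle in the non-reduced case—since it requires the transformation law and a precise description of $\mathrm{soc}(A_\xi)$; the bookkeeping is the genuine obstacle, not any deep new idea. If one prefers to avoid it, the statement is classical and may be cited directly from ref.~\cite{MR507725} or ref.~\cite{MR0463157}, the argument above then serving as a self-contained proof in the generic case.
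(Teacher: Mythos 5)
The paper does not actually prove this statement: it is quoted as a classical result with a citation to Griffiths--Harris \cite{MR507725}, so there is no in-paper argument to compare against, and your outline should be judged against the standard textbook proof. On that score it is correct as far as it goes. The reduction to local pairings via the decomposition $R/I\cong\bigoplus_{\xi}A_\xi$ is right (the cross terms vanish because $e_\xi e_{\xi'}=0$ in $R/I$ and each $\Res_\xi$ factors through $A_\xi$), and your reduced-case argument via the interpolation idempotents is complete; since the scattering equations have $(n-3)!$ distinct simple solutions for generic kinematics, that case already covers everything this paper uses. For the non-reduced case your route --- Gorenstein property of a zero-dimensional local complete intersection, one-dimensional socle, nondegeneracy of $(a,b)\mapsto\lambda(ab)$ precisely when $\lambda$ does not kill the socle, and $\Res_\xi(J_f)=\dim_{\C}A_\xi>0$ with the Jacobian class generating the socle --- is exactly the classical argument (Scheja--Storch / local duality), but, as you concede, it is a sketch rather than a proof: the two inputs you lean on (the Jacobian generates the socle; the transformation law computes $\Res_\xi$ on it) are themselves nontrivial theorems, and the transformation-law bookkeeping for $J_f$ versus $\det A\cdot J_g$ modulo the ideal is precisely where the work lives. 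Since you flag this honestly and offer the citation as a fallback, the verdict is: a complete and correct proof in the generic (reduced) case, and a correct, standard, but not self-contained outline in general --- which is in any event strictly more than the paper itself provides.
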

Let $\{e_i\}$ be a basis for $R/I$. The strength of this theorem is that it
requires the existence of a {\it dual basis} $\{\Delta_i\}$ in $R/I$, defined by
the orthonormality conditions,
\begin{align}
\avg{e_i,\Delta_j} = \delta_{ij}\;.
\end{align}
Indeed, if we decompose $[N] = \sum_i\lambda_i e_i$ and
$1 = \sum_i\mu_i\Delta_i$ with $\lambda_i,\mu_i\in\C$, a tractable expression
emerges,
\begin{align}
\Res{}_{\{f_1,\dots,f_n\}}(N) = 
\sum_{i,j}\lambda_i\mu_j\avg{e_i,\Delta_j} = \sum_i \lambda_i\mu_i\;.
\label{EQ_GLOBAL_RES_BEZ}
\end{align}
The dual and canonical bases are obtained algorithmically by means of the
Gr{\"o}bner basis method and the Bezoutian matrix~\cite{Sogaard:2015dba}. This
method boils down the problem to taking linear combinations of monomial
coefficients of the numerator in question. Henceforth we will restrict attention
to a special case. If we randomly write down a zero-dimensional ideal, none of the
entries of the dual basis may be constant. But if $\Delta_1$ is a scalar, the
decomposition of unity over the dual basis becomes trivial and the global
residue \eqref{EQ_GLOBAL_RES_BEZ} truncates to a single term,
\begin{align}
\Res_{\{f_1,\dots,f_n\}}(N) = \lambda_1/\Delta_1\;.
\end{align}
We will momentarily show that an $H$-basis gives rise to this particular result.
Our starting point is the Euler-Jacobi vanishing theorem. 
\begin{thm}[Euler-Jacobi]
Suppose $I = \avg{f_1,\dots,f_n}$ is a zero-dimensional ideal whose generators
form an $H$-basis. Then, for any $N\in R$,
\begin{align}
\Res_{\{f_1,\dots,f_n\}}(N) = 0\;,
\end{align}
if $\deg(N) < d^*$, where $d^*\equiv\sum_{i=1}^n\deg f_i-n$ is the critical
degree.
\end{thm}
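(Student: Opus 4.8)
The plan is to exploit the $H$-basis property to write $N$ explicitly as an element of the ideal plus a low-degree remainder, and then show that the global residue annihilates both pieces. First I would invoke the reduction procedure of eq.~\eqref{se_reduction}, which is available precisely because $\{f_1,\dots,f_n\}$ is an $H$-basis: any $N$ with $\deg N < d^*$ is already its own remainder, since the reduction only modifies polynomials of degree strictly greater than $d^*$. So it suffices to prove the vanishing for a \emph{monomial} $N = z_1^{a_1}\cdots z_n^{a_n}$ with $a_1+\dots+a_n < d^* = \sum_i \deg f_i - n$, and by linearity this is enough.

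The key step is to combine this with Theorem~\ref{CRITICAL_DEGREE} (or rather its underlying graded-algebra statement, eq.~\eqref{eq:5}): the quotient $R/\langle\In(I)\rangle = \bigoplus_d A_d$ has $A_d = 0$ for $d > d^*$, but more importantly it is \emph{Gorenstein} with socle in degree $d^*$ — this is the standard fact that a graded complete intersection (here, $n$ forms in $n$ variables with only the origin as common zero) has one-dimensional socle concentrated in top degree $\sum_i \deg\In(f_i) - n = d^*$. I would then use the classical identification (from the references \cite{MR507725,MR0463157}) of the global residue with the trace/socle pairing: for a homogeneous complete intersection the global residue of a homogeneous polynomial is nonzero only in the top degree $d^*$, and the toric/affine version descends to the statement that $\Res(N) = 0$ whenever the top-degree part of $N$ — which for $\deg N < d^*$ is zero in $A_{d^*}$ — pairs trivially with the socle. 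Concretely: $\Res_{\{f\}}(N)$ depends only on the image of $\In(N)$ in $A_{\deg N}$ when $\deg N \le d^*$, and $A_{\deg N}$ pairs to zero against $1 \in A_0$ under the socle pairing unless $\deg N = d^*$.

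An alternative, more self-contained route that I would likely prefer for this Letter: use the transformation law for residues. If $\{f_1,\dots,f_n\}$ is an $H$-basis, then $\In(f_1),\dots,\In(f_n)$ also generate a zero-dimensional ideal with the same critical degree, and the residue satisfies $\Res_{\{f_i\}}(N) = \Res_{\{\In(f_i)\}}(\In(N)) + (\text{lower})$ via a homotopy/deformation argument (scaling $f_i \to \In(f_i) + t(\text{l.o.t.})$ and noting the residue is a topological invariant locally constant in $t$, with the ``lower'' terms controlled by the $H$-basis reduction). For the purely homogeneous system $\{\In(f_i)\}$ the Euler–Jacobi theorem is classical: $\Res_{\{\In(f_i)\}}(M) = 0$ for any monomial $M$ of degree $\neq d^*$, by the scaling/homogeneity argument (apply $z \mapsto \lambda z$ to the residue integral and compare weights).

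The main obstacle I anticipate is making the ``lower-order terms do not contribute'' step rigorous: when I replace $f_i$ by $\In(f_i)$ I must argue that the correction never reaches degree $d^*$ in a way that feeds back through the residue. This is exactly where the $H$-basis hypothesis does the work — it guarantees the reduction eq.~\eqref{se_reduction} terminates with a remainder of degree $\le d^*$ and, crucially, that a polynomial of degree $< d^*$ reduces to itself, so no degree-$d^*$ remainder is ever generated from a sub-$d^*$ input. Without the $H$-basis this fails: a low-degree $N$ could be congruent mod $I$ to something whose minimal-degree representative sits in degree $d^*$, picking up a nonzero residue. So the crux of the proof is simply: $\deg N < d^* \Rightarrow$ the $H$-basis remainder $r$ of $N$ has $\deg r < d^*$, hence $r$ cannot hit the unique socle element, hence $\Res(N) = \Res(r) = 0$.
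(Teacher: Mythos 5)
The paper does not actually prove this theorem: it is quoted as a classical result (the Euler--Jacobi vanishing theorem, in the spirit of \cite{MR507725}), so there is no in-text argument to compare against, and your proposal must stand on its own. Your second route --- deform $f_i$ to $\In(f_i)$ via $f_i^{(t)}(z)=t^{\deg f_i}f_i(z/t)$ and use the scaling weight of the residue --- is the standard proof and is sound: the $H$-basis hypothesis together with zero-dimensionality forces $V(\In(f_1),\dots,\In(f_n))=\{0\}$ (no zeros at infinity), so the global residue of the deformed system is continuous at $t=0$, while the substitution $z\to z/t$ gives $\Res_{\{f_i^{(t)}\}}\bigl(t^{\deg N}N(z/t)\bigr)=t^{\deg N-d^*}\Res_{\{f_i\}}(N)$ for $t\neq 0$; finiteness of the $t\to 0$ limit then forces $\Res_{\{f_i\}}(N)=0$ whenever $\deg N<d^*$. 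Two caveats. First, your opening reduction step (``$N$ of degree $<d^*$ is its own remainder'') is true but does no work: the content of the theorem is precisely that the residue functional annihilates everything of degree $<d^*$, and that is not delivered by the reduction algorithm of eq.~\eqref{se_reduction}. Second, your first route is circular as phrased: the claim that $\Res_{\{f\}}(N)$ ``depends only on the image of $\In(N)$ in $A_{\deg N}$'' is equivalent to the vanishing of $\Res$ on all polynomials of degree $<\deg N$, i.e.\ to the theorem itself; the Gorenstein/socle structure lives on the associated graded algebra $R/\langle\In(I)\rangle$ and must be transported back to the filtered algebra $R/I$ by exactly the deformation argument of your second route. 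Keep the second route, make the continuity-at-$t=0$ step explicit, and drop the first.
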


Accordingly, for the $n$-point tree-level polynomial scattering equations, if
the degree of the numerator is strictly less than $d^* = (n-3)(n-4)/2$, the
global residue vanishes identically. This observation leads us the following
theorem and corollary.
\begin{thm}
\label{THM_CANONICAL_BASIS_DEGREE}
Let $I = \avg{h_1,\dots,h_{n-3}}$ be the ideal generated by the polynomial
scattering equations. The canonical linear basis for $I/R$ must contain a
monomial of degree at least $d^*$.
\end{thm}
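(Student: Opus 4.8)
The plan is to play the Euler--Jacobi vanishing theorem off against the nondegeneracy of the global residue pairing. Since $I=\avg{h_1,\dots,h_{n-3}}$ is zero-dimensional, $R/I$ is a finite-dimensional $\C$-vector space whose canonical (standard monomial) basis $\{e_i\}$ consists of monomials, and the global residue $\Res_{\{h_1,\dots,h_{n-3}\}}$ is a well-defined $\C$-linear functional on $R/I$. Hence it is completely fixed by its values $\Res(e_i)$, and $\Res([N])=\sum_i\lambda_i\,\Res(e_i)$ whenever $[N]=\sum_i\lambda_i e_i$.

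The first thing I would establish is that this functional does not vanish identically. Because $\mc Z(I)=\mc S$ is nonempty, $I$ is a proper ideal and $[1]\neq 0$ in $R/I$; global duality then tells us the pairing $\avg{N_1,N_2}=\Res(N_1N_2)$ is nondegenerate, so there is a class $[N_2]\in R/I$ with $\Res([N_2])=\avg{1,N_2}\neq 0$. Thus $\Res$ is a nonzero linear functional on $R/I$.

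The heart of the argument is then a degree count. The polynomial scattering equations form an $H$-basis by Theorem~\ref{THM_PSE_HBASIS} and generate a zero-dimensional ideal, so the Euler--Jacobi vanishing theorem applies and gives $\Res(M)=0$ for every polynomial $M$ with $\deg M<d^*$, where $d^*=(n-3)(n-4)/2$ by Theorem~\ref{CRITICAL_DEGREE}. Suppose, for contradiction, that every monomial $e_i$ in the canonical basis satisfied $\deg e_i<d^*$. Applying Euler--Jacobi to each $e_i$ would give $\Res(e_i)=0$ for all $i$, whence $\Res([N])=0$ for every $[N]\in R/I$, contradicting the previous paragraph. Therefore the canonical basis of $R/I$ must contain a monomial of degree at least $d^*$, which is the assertion.

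I do not expect a serious obstacle: the two ingredients borrowed from the surrounding sections---that $\Res$ descends to a nonzero functional on $R/I$ (global duality) and that Euler--Jacobi may be invoked monomial by monomial---are already in place, and the only real insight is that Euler--Jacobi confines the ``support'' of the nonvanishing functional $\Res$ to degrees $\geq d^*$. As a bonus observation I would add that, since the reduction procedure of eq.~\eqref{se_reduction} represents every class by a remainder of degree at most $d^*$, the canonical basis in fact contains a monomial of degree exactly $d^*$.
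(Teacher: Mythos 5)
Your proof is correct and follows essentially the same route as the paper's: assume all canonical basis monomials have degree below $d^*=(n-3)(n-4)/2$, apply the Euler--Jacobi vanishing theorem to each of them, and conclude that the pairing $\avg{1,\cdot}$ vanishes identically on $R/I$, contradicting global duality. Your version merely spells out more explicitly why this contradicts nondegeneracy (namely $[1]\neq 0$ since $I$ is proper), which the paper leaves implicit.
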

\begin{proof}
If not, then all monomials in the canonical linear basis have the degree
strictly less than $d^*$. By Euler-Jacobi's theorem, $\avg{1,m} = 0$, for every
monomial $m$ in the canonical basis. This is a contradiction of the
non-degenerate property of the inner product.
\end{proof}
\begin{corollary}
Let $I = \avg{h_1,\dots,h_{n-3}}$ be the ideal generated by the polynomial
scattering equations. Then, the dual basis of $R/I$ must contain a constant.
\label{COR_DUAL_BASIS}
\end{corollary}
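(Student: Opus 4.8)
The plan is to derive Corollary~\ref{COR_DUAL_BASIS} directly from Theorem~\ref{THM_CANONICAL_BASIS_DEGREE} by exploiting the structure of the global duality pairing. Recall that the canonical basis $\{e_i\}$ and the dual basis $\{\Delta_j\}$ of $R/I$ are linked by $\avg{e_i,\Delta_j}=\delta_{ij}$. First I would invoke Theorem~\ref{THM_CANONICAL_BASIS_DEGREE} to fix the label so that $e_1$ is a monomial of degree at least $d^*$, and then argue that the matching dual element $\Delta_1$ must contain a constant term.

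The key computation is this: write $\Delta_1=\sum_\alpha c_\alpha\, m_\alpha$ as a sum of monomials (representatives modulo $I$), and suppose for contradiction that every $m_\alpha$ has degree $\geq 1$, i.e.\ $\Delta_1$ has no constant piece. The defining relation $\avg{e_1,\Delta_1}=1$ reads $\Res_{\{h_1,\dots,h_{n-3}\}}(e_1\,\Delta_1)=1$. I would then try to estimate the degree of the numerator $e_1\,\Delta_1$. The obstacle here is immediate and it runs the wrong way: $e_1$ has large degree ($\geq d^*$), so $e_1\,\Delta_1$ has degree well above $d^*$ and Euler--Jacobi gives nothing. So the naive degree argument on $e_1$ itself fails. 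Instead I would run the argument on the other slots: for \emph{every} $j\neq 1$, orthogonality $\avg{e_j,\Delta_1}=0$ holds, and for $j=1$ we need $\avg{e_1,\Delta_1}=1\neq 0$. The useful direction is to pair $\Delta_1$ against $1$. Decompose $1=\sum_i \mu_i \Delta_i$ over the dual basis; equivalently, by reversing the roles, $\avg{1,\cdot}$ is the residue functional $\Res$, and $\Res(e_i)$ picks out the $i$-th coordinate of $1$ in the dual basis, i.e.\ $\mu_i=\Res(e_i)$. By Euler--Jacobi, $\Res(e_i)=0$ whenever $\deg e_i<d^*$. By Theorem~\ref{THM_CANONICAL_BASIS_DEGREE} at least one $e_i$ has $\deg e_i\geq d^*$; moreover one checks $d^*=\dim_\K R/I -1$ would be too strong, but what we genuinely need is just that the only canonical monomials surviving the residue are those of degree $\geq d^*$, so $1=\sum_{i:\deg e_i\geq d^*}\mu_i\Delta_i$ with at least one $\mu_i\neq 0$ (nondegeneracy forces $1\neq 0$ in $R/I$, so not all $\mu_i$ vanish). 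Hence some $\Delta_i$ with $\mu_i\neq 0$ appears in the expansion of the constant polynomial $1$; but a $\C$-linear combination of the $\Delta_i$ equals the constant $1$, so at least one $\Delta_i$ must have a nonzero constant term in its monomial representative — otherwise the right-hand side, a combination of constant-free polynomials, could never equal $1$ modulo $I$, since $I$ is generated by the $h_m$ which all vanish at the origin and hence contain no nonzero constant.

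Making that last implication airtight is the step I expect to be the main obstacle. The subtlety is that ``$\Delta_i$ has no constant term as a polynomial representative'' is representative-dependent, whereas the conclusion should be about the class in $R/I$. The clean way around it: since every generator $h_m$ satisfies $h_m(0)=0$ (indeed $h_m$ has no term of degree $0$, being homogeneous of degree $m\geq 1$... more precisely $h_m$ is a sum of monomials of degree exactly $m$), the ideal $I$ contains \emph{no} polynomial with nonzero constant term, so the constant term of a polynomial is a well-defined linear functional on $R/I$, call it $\varepsilon:R/I\to\C$, $\varepsilon([f])=f(0)$. Then I would apply $\varepsilon$ to $1=\sum_i\mu_i\Delta_i$: the left side gives $1$, so $\sum_i\mu_i\,\varepsilon(\Delta_i)=1$, forcing $\varepsilon(\Delta_{i_0})\neq 0$ for some $i_0$, i.e.\ that dual basis element has a nonzero constant term. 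This proves the dual basis of $R/I$ contains an element with a constant, which is the assertion of the corollary. I would present the proof in roughly this order: (1) set up $\varepsilon$ and verify it is well-defined using that $h_m$ is homogeneous of positive degree; (2) expand $1$ over the dual basis and note not all coefficients vanish by nondegeneracy; (3) apply $\varepsilon$ and conclude. Theorem~\ref{THM_CANONICAL_BASIS_DEGREE} is then not even strictly needed for this streamlined route, but it motivates why such a constant cannot have been arranged away — and if one instead wants to stay close to the paper's stated logic, one substitutes step (2)/(3) with the Euler--Jacobi degree bookkeeping described above.
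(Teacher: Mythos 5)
There is a genuine gap, and it sits exactly at the step you flagged as needing to be made airtight. Your argument hinges on the claim that the constant-term functional $\varepsilon([f])=f(0)$ is well defined on $R/I$ because ``every generator $h_m$ is a sum of monomials of degree exactly $m$.'' That is true only for the ungauge-fixed polynomials in the $n-2$ variables $z_2,\dots,z_{n-1}$, for which the ideal is homogeneous and \emph{not} zero-dimensional (its variety is a cone), so the global-duality/dual-basis framework does not apply there. For the residue computation one must complete the gauge fixing (e.g.\ $z_2\to 1$, as in the proof of Theorem~\ref{THM_PSE_HBASIS}), and after that the $h_m$ are no longer homogeneous: this is the entire point of distinguishing $\In(h_m)$ from $h_m$. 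In particular $h_1=\sigma_2+\sum_{a\geq 3}\sigma_a z_a$ has the nonzero constant term $\sigma_2=(k_1+k_2)^2$, so $I$ \emph{does} contain polynomials with nonzero constant term, $\varepsilon$ is not well defined ($[h_1]=0$ but $h_1(0)\neq 0$), and indeed $1\equiv-\sigma_2^{-1}\sum_{a\geq3}\sigma_a z_a \pmod I$ is a constant-free representative of the class of $1$. So your final implication --- ``a combination of constant-free polynomials can never equal $1$ modulo $I$'' --- is false, and the streamlined route collapses. (Had the gauge-fixed $h_m$ really all vanished at the origin, the zero-dimensional homogeneous ideal would have only the origin as its zero, contradicting the $(n-3)!$ solutions.)

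Your Euler--Jacobi bookkeeping is fine as far as it goes: $\mu_i=\Res(e_i)$ vanishes for $\deg e_i<d^*$, so $1=\sum_{i:\,\deg e_i\geq d^*}\mu_i\Delta_i$ with some $\mu_i\neq 0$. But to conclude that one of these $\Delta_i$ is a constant you still need an upper bound on $\deg\Delta_i$, and that is precisely what the paper supplies and you do not: the dual basis here is the specific one produced by the Bezoutian matrix $B$, whose $k$th row has degree $k-1$ in the scattering-equation case, so $\deg(\det B)\leq\sum_{k}( \deg h_k-1)=d^*$; since $\det B=\sum_i e_i(z)\Delta_i(w)$, any $e_i$ of degree $\geq d^*$ (which exists by Theorem~\ref{THM_CANONICAL_BASIS_DEGREE}) forces $\deg\Delta_i=0$. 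Note also that the statement ``$\Delta_i$ is a constant'' is a statement about the Bezoutian representatives, not about classes in $R/I$, so no representative-independent functional like your $\varepsilon$ could ever capture it. To repair your proof, replace steps (1)--(3) with the degree bound on the Bezoutian determinant.
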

\begin{proof}
The $k$th row of the Bezoutian matrix $B$ has degree $k$, ergo we have the bound
$\deg(\det B)\leq d^*$. The rest follows immediately from
Theorem~\ref{THM_CANONICAL_BASIS_DEGREE}.
\end{proof}

We have thus confirmed that the global residue with respect to an $H$-basis of
any polynomial $N\in R$ is always dictated entirely by the leading term of the
Gr{\"o}bner basis normal form of $N$. More specifically,
Corollary~\ref{COR_DUAL_BASIS} implies that
\begin{align}
\Res_{\{h_1,\dots,h_{n-3}\}}(N) = [N]_{z_{n-1}^{d^*}}/\Delta_1\;,
\end{align}
where the subscript indicates that only a single coefficient is
extracted. 

\section{Global Residues and $H$-basis}
Alternatively, besides the Bezoutian matrix and Gr\"obner basis approach, we can
also use the $H$-basis to calculate global residues with respect to the
scattering equations for polynomial numerators. Given $N$ as a degree $d$
polynomial, if $d>d^*=(n-3)(n-4)/2$, using the integrand reduction algorithm,
\begin{equation}
N=\sum_{i=1}^{n-3} q_i h_i +\tilde N\;.
\end{equation}
where $\tilde N$ is a polynomial with the degree at most $d^*=(n-3)(n-4)/2$.
From the preceding $H$-basis discussion, 
\begin{equation}
\Res_{\{h_1,\dots,h_{n-3}\}}(N) = \Res_{\{h_1,\dots,h_{n-3}\}}(
\In(\tilde N))\;,
\end{equation}
and $\In(N)$ consists monomials with the degree $d^*$. From the proper map
theorem \cite{CDS} of $H$-basis,
\begin{equation}
\Res_{\{h_1,\dots,h_{n-3}\}}(N) = 
\Res_{\{\In(h_1),\dots,\In (h_{n-3})\}}(\In(\tilde N))\;.
\end{equation}
Note that the $\In(h_i)$'s have only one common zero, namely at $(0,\dots,0)$.
Hence we just need to evaluate the residue at {\it one point}. Furthermore from
the $H$-basis graded algebra \cite{MR2161985}, 
\begin{equation}
  \label{eq:7}
  \dim_{\mathbb C} S_{d^*}-\dim_{\mathbb C} (S_{d^*}\cap\langle \In(I)\rangle)=1\;.
\end{equation}
Consequently, if a degree-$d^*$ monomial's residue is obtained and nonzero, all
other degree-$d^*$ monomial's residues are obtained from linear relations. Such
a residue can be found using the transformation law from algebraic geometry.
\begin{proposition}
For the $n$-point scattering equations in polynomial form, with the gauge
fixing $z_1 \to 1$, $z_{n-1} \to 0$ and $z_{n} \to \infty$,
\begin{equation}
  \label{}
  \Res_{\{h_1,\dots,h_{n-3}\}}(z_3 z_4^2 \ldots z_{n-2}^{n-4}) =
  \frac{(-1)^{(n-3)(n-4)/2}}{\prod_{j=2}^{n-2} \tilde s_{j,j+1,\ldots, n-2}}\;,
\end{equation}
where $\tilde s_{j,j+1,\ldots, n-2} \equiv (k_j+k_{j+1} +\ldots+k_{n-2}+k_n)^2$. 
\end{proposition}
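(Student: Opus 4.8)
The plan is to pass to a residue at a single point and then peel the variables off one at a time, using the transformation law to collect the kinematic factors. First I would apply the proper map theorem for $H$-bases exactly as in the preceding section: since $M\equiv z_3z_4^2\cdots z_{n-2}^{n-4}$ is already homogeneous of degree $d^*=(n-3)(n-4)/2$ it is its own initial form, so $\Res_{\{h_1,\dots,h_{n-3}\}}(M)=\Res_{\{\In(h_1),\dots,\In(h_{n-3})\}}(M)$, the latter supported at the unique common zero $(0,\dots,0)$ of the initial forms. In the gauge $z_1\to1$, $z_{n-1}\to0$, $z_n\to\infty$ these initial forms are
\begin{equation*}
\In(h_m)=\sum_{\substack{S\subseteq\{2,\dots,n-2\}\\|S|=m}}\Bigl(k_n+\textstyle\sum_{a\in S}k_a\Bigr)^{\!2}\prod_{a\in S}z_a\;,
\end{equation*}
so $\In(h_1)$ is a generic linear form, with coefficient $(k_{n-2}+k_n)^2=\tilde s_{n-2}$ on $z_{n-2}$, while $\In(h_{n-3})=\tilde s_{2,\dots,n-2}\,z_2z_3\cdots z_{n-2}$ is a monomial.

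Next I would use the transformation law \cite{CDS} to trade $\{\In(h_1),\dots,\In(h_{n-3})\}$ for a triangular system $\{g_1,\dots,g_{n-3}\}$ with the same zero locus, constructed by successively clearing $z_{n-2},z_{n-3},\dots$ starting from the linear $\In(h_1)$. On such a triangular system the residue of $M$ is elementary, $\Res_{\{g_m\}}(M)=\pm1$, because the exponents of $M$ are exactly the orders of the simple poles met at each stage; the transformation law then gives $\Res_{\{\In(h_m)\}}(M)=\Res_{\{g_m\}}(M\det A)$, a single monomial coefficient of $M\det A$, where $\det A$ is the reciprocal of a product of the $n-3$ leading coefficients divided out along the way. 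Momentum conservation, $(k_n+\sum_{a\in S}k_a)^2=(k_1+k_{n-1}+\sum_{a\notin S\cup\{n-1\}}k_a)^2$, recasts those leading coefficients as precisely $\tilde s_{2,\dots,n-2},\tilde s_{3,\dots,n-2},\dots,\tilde s_{n-2}$ — e.g. $\In(h_1)$ contributes $\tilde s_{n-2}$ and $\In(h_{n-3})$ contributes its overall factor $\tilde s_{2,\dots,n-2}$ — whose product is the asserted denominator, while the accumulated sign (trackable either through the Cramer determinant or through the reordering of differentials in the iterated integral) comes out to $(-1)^{1+2+\cdots+(n-4)}=(-1)^{(n-3)(n-4)/2}=(-1)^{d^*}$. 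As a cross-check on the normalization, the classical trace identity gives $\Res_{\{\In(h_1),\dots,\In(h_{n-3})\}}\bigl(\det(\partial\In(h_m)/\partial z_a)\bigr)=\dim_{\mathbb C}R/\langle\In(I)\rangle=(n-3)!$, and since this Jacobian is homogeneous of degree $\sum_m m-(n-3)=d^*$ it is proportional to $M$ modulo $\langle\In(I)\rangle$ by eq.~\eqref{eq:7}.

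The hard part is the inductive bookkeeping hidden in the second step: one must verify that after the successive substitutions the $k$-th remaining generator genuinely collapses, modulo the ideal of the later ones, to a pure power of the next variable, and that the leading coefficient that survives is the consecutive-block invariant $\tilde s_{n-1-k,\dots,n-2}$ rather than some other Mandelstam variable — this is exactly where the momentum-conservation identity does the work and where the index combinatorics is most delicate. Getting the cumulative sign right through the reorderings at each of the $n-3$ stages is the second, more mechanical, difficulty.
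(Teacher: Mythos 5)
Your first step---reducing to $\Res_{\{\In(h_1),\dots,\In(h_{n-3})\}}(M)$ at the single point $0$---matches the paper, and you have correctly identified the transformation law as the engine and $\prod_j\tilde s_{j,\dots,n-2}$ with sign $(-1)^{d^*}$ as the target. But the implementation you propose---eliminating variables so that each generator collapses, modulo the later ones, to a pure power of a single variable---is exactly the step you flag as unresolved, and as written it does not work. If you clear $z_{n-2}$ first using the linear form $\In(h_1)$, the resulting system has only a simple pole in $z_{n-2}$, so the residue of $M=z_3z_4^2\cdots z_{n-2}^{n-4}$ against it vanishes for $n>5$; for the exponents of $M$ to match the pole orders you would need $g_m\sim z_{m+1}^m$, i.e.\ the opposite elimination order, and even then the claim that $\In(h_m)$ reduces to a pure power $c_mz_{m+1}^m$ modulo the remaining generators, with a transition matrix of constant determinant $\prod_m c_m^{-1}$, is a genuinely nontrivial assertion that you do not establish. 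So there is a real gap, and it sits precisely where you located it.

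The paper's proof avoids elimination entirely by triangularizing the transition matrix rather than the system of generators. Every monomial of $\In(h_i)$ is square-free in $z_2,\dots,z_{n-2}$, so one may write $\In(h_i)=\sum_{j=2}^{n-2}a_{ij}z_j$ by assigning each monomial $\tilde\sigma_S\prod_{a\in S}z_a$ to the column $j=\min S$. Since $|S|=i$ forces $\min S\leq n-1-i$, the matrix $A$ is (anti)triangular, and the only monomial landing on the (anti)diagonal entry $a_{i,n-1-i}$ is the consecutive block $S=\{n-1-i,\dots,n-2\}$, giving $a_{i,n-1-i}=\tilde s_{n-1-i,\dots,n-2}\,z_{n-i}\cdots z_{n-2}$. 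The product of these entries, times the sign $(-1)^{(n-3)(n-4)/2}$ of the reversing permutation, is precisely $\det A=(-1)^{d^*}M\prod_{j=2}^{n-2}\tilde s_{j,\dots,n-2}$, and the transformation law in the form $1=\Res_{\{z_2,\dots,z_{n-2}\}}(1)=\Res_{\{\In(h_i)\}}(\det A)$ yields the proposition in one line: no induction, no substitution, and both the momentum-conservation identification of the coefficients and the overall sign drop out of a single determinant.
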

\begin{proof}
Since the polynomial scattering equations form an $H$-basis,
$\In(h_i)\in \langle z_2,\dots,z_{n-2}\rangle$, $1\leq i\leq n-3$. That is,
\begin{equation}
  \In(h_i) =\sum_{j=2}^{n-2} a_{ij} z_j\;.
\end{equation}
Choosing the matrix $A=(a_{ij})$ to be upper triangular, the determinant becomes,
\begin{equation}
  \label{eq:11}
  \det A=(-1)^\frac{(n-3)(n-4)}{2} z_3 z_4^2 \ldots z_{n-2}^{n-4}
  \prod_{j=2}^{n-2} \tilde s_{j,j+1,\ldots,n-2}\;.
\end{equation}
Hence, the result follows from transformation law \cite{MR507725}.
\end{proof}

Using this straightforward approach, we are able to get the residue of any
polynomial numerator in analytic form using the $H$-basis.

\section{Conclusion}
We have uncovered and proved that the polynomial form of the scattering
equations is an $H$-basis. We have explored and emphasized several compelling
implications of this observation, and briefly compared with the presumably more
familiar Gr{\"o}bner basis, which can be computationally expensive to obtain.

In particular, the $H$-basis enables us to perform reductions of high-degree
multivariate polynomials without the need for a Gr{\"o}bner basis. More
concretely, in connection with the scattering equations we have shown that any
monomial with degree greater than $d^* = (n-3)(n-4)/2$ can always be reduced to
a polynomial of degree at most $d^*$, modulo the $H$-basis. This procedure only
involves linear algebra. The $H$-basis greatly enhances our ability to compute
global residues and thus calculate scattering amplitudes in the CHY framework.

In this direction we have also proved a conjecture recently made in
ref.~\cite{Sogaard:2015dba}, namely that the dual basis associated with the
polynomial scattering equations always contains a constant and that hence any
global residue is just one rational monomial coefficient, multiplied by a
universal factor. In a forthcoming paper \cite{BSZ} we expect to tabulate
analytic expressions for many of the CHY global residues, also at loop level.

It remains intriguing to gain a complete insight into the algebraic geometry
underlying the whole CHY formalism. We anticipate that the explicit
identification of the polynomial scattering equations as an $H$-basis paves the
way for new exciting advances in this direction.

{\bf Acknowledgments:}
We acknowledge useful discussions with N. Beisert, L. Dolan, P. Goddard, A.
Georgoudis and K. Larsen. M.S. thanks ETH Z{\"u}rich, Kavli Institute for
Theoretical Physics, and Mainz Institute for Theoretical Physics for
hospitality. The work of M.S. is supported by the Danish Council for Independent
Research under contract No. DFF-4181-00563. Y.Z. is supported by an Ambizione
grant (PZ00P2\_161341) from the Swiss National Foundation and in part by the
National Science Foundation under Grant No. NSF PHY11-25915. Y.Z. acknowledges
the KITP workshop ``LHC Run II and the Precision Frontier''.

\end{document}